\definecolor{background}{cmyk}{0.2,0,0,0}
\definecolor{textcolor}{cmyk}{1,0,0,0.3}
\newtheorem{theorem}{Theorem}
\newtheorem{lemma}[theorem]{Lemma}
\newtheorem{corollary}[theorem]{Corollary}
\newtheorem{proposition}[theorem]{Proposition}
\newtheorem{definition}[theorem]{Definition}
\newenvironment{proof}{\noindent\textbf{Proof.}}{{}\hfill$\Box$\\}
\newenvironment{proofof}{\noindent\textbf{Proof of~}}{{}\hfill$\Box$\\}
\def\eps{\varepsilon}
\def\cE{\mathscr{E}}
\def\Yao#1#2{\ensuremath{\mathrm{Yao}_{#1}^{#2}}}
\def\Y4{\Yao{4}{\infty}}
\title{\textbf{The Stretch Factor of $L_1$- and $L_\infty$-Delaunay
    Triangulations}}
\author{
  Nicolas Bonichon\\
  LaBRI - INRIA Bordeaux Sud-Ouest\\
  \textrm{bonichon@labri.fr}
  \and
  Cyril Gavoille\thanks{Member of the ``Institut Universitaire
    de France''. Supported also by the ANR project ``ALADDIN''.}\\
  LaBRI - University of Bordeaux\\
  \textrm{gavoille@labri.fr}
  \and
  Nicolas Hanusse\\
  LaBRI - CNRS\\
  \textrm{hanusse@labri.fr} 
  \and
  Ljubomir Perkovi\'{c}\thanks{Supported by a Fulbright Aquitaine
    Regional grant and a DePaul University research grant.}\\
  DePaul University, Chigago\\
  \textrm{lperkovic@cs.depaul.edu}
}
\date{}
\begin{document}

\maketitle

\begin{abstract}
  In this paper we determine the stretch factor of the $L_1$-Delaunay
  and $L_\infty$-Delaunay triangulations, and we show that this
  stretch is $\sqrt{4+2\sqrt{2}} \approx 2.61$. Between any two points
  $x,y$ of such triangulations, we construct a path whose length is no
  more than $\sqrt{4+2\sqrt{2}}$ times the Euclidean distance between
  $x$ and $y$, and this bound is best possible. This definitively
  improves the 25-year old bound of $\sqrt{10}$ by Chew (SoCG~'86).

  To the best of our knowledge, this is the first time the stretch
  factor of the well-studied $L_p$-Delaunay triangulations, for any
  real $p\ge 1$, is determined exactly.
\end{abstract}

\textbf{Keywords:} Delaunay triangulations, $L_1$-metric,
$L_\infty$-metric, stretch factor

\section{Introduction}

Given a set of points $P$ on the plane, the Delaunay triangulation for
$P$ is a spanning subgraph of the Euclidean graph on $P$ that is the
dual of the Vorono{\"\i} diagram of $P$. The Delaunay triangulation is
a fundamental structure with many applications in computational
geometry and other areas of Computer Science. In some applications
(including on-line routing~\cite{BM04b}), the triangulation is used as
a spanner, defined as a spanning subgraph in which the distance
between any pair of points is no more than a constant multiplicative
ratio of the Euclidean distance between the points. The constant ratio
is typically referred to as the stretch factor of the spanner. While
Delaunay triangulations have been studied extensively, obtaining a
tight bound on its stretch factor has been elusive even after decades
of attempts.

In the mid-1980s, it was not known whether Delaunay triangulations
were spanners at all.  In order to gain an understanding of the
spanning properties of Delaunay triangulations, Chew considered
related, ``easier'' structures. In his seminal 1986
paper~\cite{Chew86}, he proved that an $L_1$-Delaunay triangulation
--- the dual of the Vorono{\"\i} diagram of $P$ based on the
$L_1$-metric rather than the $L_2$-metric --- has a stretch factor
bounded by $\sqrt{10}$. Chew then continued on and showed that the a
TD-Delaunay triangulation --- the dual of a Vorono{\"\i} diagram
defined using a \emph{Triangular Distance}, a distance function not
based on a circle ($L_2$-metric) or a square ($L_1$-metric) but an
equilateral triangle --- has a stretch factor of~$2$~\cite{Chew89}.

Finally, Dobkin et al.~\cite{DFS87} succeeded in showing that the
(classical, $L_2$-metric) Delaunay triangulation of $P$ is a spanner
as well. The bound on the stretch factor they obtained was
subsequently improved by Keil and Gutwin~\cite{KG92} as shown in
Table~\ref{ta:related}. The bound by Keil and Gutwin stood unchallenged
for many years until very recently when Xia improved the bound to
below 2~\cite{Xia11}.

\begin{table}
\label{ta:related}
\begin{center}
\mbox{\renewcommand{\arraystretch}{1.35}
\begin{tabular}{llr}
{\bf Paper} &  {\bf Graph} & {\bf Stretch factor} \\ \hline\hline
\cite{DFS87} & $L_2$-Delaunay & $\pi(1+\sqrt{5})/2 \approx 5.08$ \\ \hline
\cite{KG92} & $L_2$-Delaunay & $4\pi/(3\sqrt{3}) \approx 2.41$ \\ \hline
\cite{Xia11} & $L_2$-Delaunay & $1.998$ \\ \hline
\cite{Chew89} & TD-Delaunay & $\mathbf{2}$ \\ \hline
\cite{Chew86} & $L_1$-,$L_\infty$-Delaunay & $\sqrt{10} \approx 3.16$ \\ \hline
\textbf{[this paper]}\hspace{5ex} & $L_1$-,$L_\infty$-Delaunay \hspace{3ex} & $\mathbf{\sqrt{4+2\sqrt{2}}\approx 2.61}$\\ \hline\hline
\end{tabular}}
\end{center}
\caption{Key stretch factor upper bounds (optimal values are bold).}
\end{table}

While progress has been made, none of the techniques developed so far
lead to a tight bound on the spanning ratio. There has been some
progress recently on the lower bound side. The trivial lower bound of
$\pi/2 \approx 1.5707$ has recently been improved to
$1.5846$~\cite{BDLSV11} and then to $1.5932$~\cite{XZ11}.

While much effort has been made on understanding the stretch factor of
Delaunay triangulations, little has been done on the $L_p$-Delaunay
triangulations for $p\neq 2$. Lee and Wong~\cite{LW80} show that
$L_1$-,$L_\infty$-Delaunay triangulations have applications in
scheduling problems for $2$-dimensional storage, and how to construct,
for all real $p\ge 1$, Vorono{\"\i} diagrams in the $L_p$-metric in
$O(n\log{n})$ time~\cite{Lee80}. Delaunay triangulations based on
arbitrary convex distance functions have been studied
in~\cite{BCCS08}, which shows that such geometric graphs are indeed
plane graphs and spanners whose stretch factor depends only on the
shape of the convex body. However, due to the general approach, the
bounds on the stretch factor remain loose. For instance the bounds
they obtain for $L_2$-Delaunay triangulations are $> 24$.

The general picture is that, in spite of much effort, with the exception
of the triangular distance the exact value of the stretch factor of Delaunay
triangulations based on any convex function is unknown. In particular,
the stretch factor of $L_p$-Delaunay triangulations is unknown for each
$p\ge 1$.

\paragraph{Our contributions.}

We show that the exact stretch factor of $L_1$-Delaunay triangulations
and $L_\infty$-Delaunay triangulations is $\sqrt{4+2\sqrt{2}} \approx
2.61$, ultimately improving the $3.16$ bound of Chew~\cite{Chew86}.

Technically, we use rectangular coordinates to prove the upper
bound. We show that the distance in the triangulation between any
source-destination pair of points lying on the border of a horizontal
rectangle of length $x$ and of depth $y\le x$ is no more than
$(1+\sqrt{2})x+y$. The stretch factor bound then simply follows.  In
our proof, we construct the route from the source to the destination
by maintaining two possible short paths, until we reach some special
point (called \emph{inductive} point) where we can apply our main
inductive hypothesis.

Despite the technical aspect of our contribution, we believe that our
proof techniques may give insights into determining the stretch factor of
other convex distance based Delaunay triangulations. For example, let $P_k$
denote the convex distance function defined by a regular $k$-gon. We observe
that the stretch factor of $P_k$-Delaunay triangulations is known for $k=3,4$
since $P_3$ is the triangular distance function of~\cite{Chew89}, and
$P_4$ is nothing else than the $L_\infty$-metric. Determining the
stretch factor of $P_k$-Delaunay triangulations for larger $k$ would
undoubtedly be an important step towards understanding the stretch factor
of classical Delaunay triangulations.


\section{Preliminaries}
\label{sec:prelim}

Given a set $P$ of points in the two-dimensional Euclidean space, the
Euclidean graph $\cE$ is the complete weighted graph embedded in the
plane whose nodes are identified with the points. We assume a
Cartesian coordinate system is associated with the Euclidean space and
thus every point can be specified with $x$ and $y$ coordinates. For
every pair of nodes $u$ and $w$, the edge $(u,w)$ represents the
segment $[uw]$ and its weight is the edge length defined in Euclidean
distance: $d_2(u,w) = \sqrt{d_x(u,w)^2 + d_y(u,w)^2}$ where $d_x(u,w)$
(resp. $d_y(u,w)$) is the difference between the $x$ (resp. $y$)
coordinates of $u$ and $w$. 

We say that a subgraph $H$ of a graph $G$ is a \emph{$t$-spanner} of
$G$ if for any pair of vertices $u,v$ of $G$, the distance between $u$
and $v$ in $H$ is at most $t$ times the distance between $u$ and $v$
in $G$; the constant $t$ is referred to as the \emph{stretch factor}
of $H$ (with respect to $G$).  $H$ is a $t$-spanner (or spanner for
some $t$ constant) if it is a $t$-spanner of $\cE$.

In our paper, we deal with the construction of spanners based on
Delaunay triangulations. As we saw in the introduction, the
$L_1$-Delaunay triangulation is the dual of the Vorono{\"\i} diagram
based on the $L_1$-metric $d_1(u,w) = d_x(u,w) + d_y(u,w)$. A property
of the $L_1$-Delaunay triangulations, actually shared by all
$L_p$-Delaunay triangulations, is that all their triangles can be
defined in terms of empty circumscribed convex bodies (squares for
$L_1$ or $L_\infty$ and circles for $L_2$). More precisely, let a
\emph{square} in the plane be a square whose sides are parallel to the
$x$ and $y$ axis and let a \emph{tipped square} be a square tipped at
$45^\circ$. For every pair of points $u,v \in P$, $(u,v)$ is an edge
in the \emph{$L_1$-Delaunay triangulation} of $P$ iff there is a
tipped square that has $u$ and $v$ on its boundary and contains no
point of $P$ in its interior (cf.~\cite{Chew89}).


If a \emph{square} with sides parallel to the $x$ and $y$ axes, rather
than a tipped square, is used in this definition then a different
triangulation is defined; it corresponds to the dual of the
Vorono{\"\i} diagram based on the $L_\infty$-metric $d_\infty(u,w) =
\max\{d_x(u,w), d_y(u,w)\}$. We refer to this triangulation as the
$L_\infty$-Delaunay triangulation. This triangulation is nothing more
than the $L_1$-Delaunay triangulation of the set of points $P$ after
rotating all the points by $45^\circ$ around the origin. Therefore
Chew's bound of $\sqrt{10}$ on the stretch factor of the
$L_1$-Delaunay triangulation (\cite{Chew86}) applies to
$L_\infty$-Delaunay triangulations as well. In the remainder of this
paper, we will be referring to $L_\infty$-Delaunay (rather than $L_1$)
triangulations because we will be (mostly) using the $L_\infty$-metric
and squares, rather than tipped squares.

One issue with Delaunay triangulations is that there might not be a
unique triangulation of a given set of points $P$. To insure
uniqueness and keep our arguments simple, we make the usual assumption
that the points in $P$ are in \emph{general position}, which for us
means that no four points lie on the boundary of a square and
no two points share the same abscissa or the same ordinate.

\begin{figure}[b!]
\begin{center}
\begin{tikzpicture}

\draw [color=textcolor] (0,0) rectangle (2.15, 1.22);

\node at (0, 0) [fill, inner sep = 2pt, circle, label=180:$a$] (l) {};
\node at (2.01, -1.8) [fill, inner sep = 2pt, circle, label=0:$c_2$] {};

\foreach \x in {0.2, 0.4, ..., 3.0}
{
  \node at (\x*0.05+2, \x-2) [fill, inner sep = 1pt, circle] (h) {};
  \draw (h) -- (l);
  \draw (h) -- (\x*0.05+2.01, \x-1.8);
  \node at (\x*0.05, \x) [fill, inner sep = 1pt, circle] (l) {};
  \draw (h) -- (l);
  \draw (l) -- (\x*0.05-0.01, \x-0.2);
}

\node at (0.15, 3) [fill, inner sep = 2pt, circle, label=180:$c_1$] {};
\node at (2.15, 1.2) [fill, inner sep = 2pt, circle, label=0:$b$] (h) {};
\draw (h) -- (l);

\draw [color=textcolor, dashed] (0.15, 3) -- (0.15, 1) -- (2.15, 1) -- node [right] {\scriptsize $1-2\delta$} (2.15, 3) -- node [above] {\scriptsize $1-\delta$} (0.15, 3);

\node at (-0.8, 1.5) [] {\color{textcolor}{\scriptsize $\sqrt{2}-2\delta$}};
\node at (2.75, 0.6) [] {\color{textcolor}{\scriptsize $\sqrt{2}-1$}};
\node at (2.65, -0.9) [] {\color{textcolor}{\scriptsize $1-2\delta$}};
\end{tikzpicture}\hspace{2cm}
\begin{tikzpicture}
\draw [color=textcolor, dashed] (0,0) rectangle (3, -3);
\node at (3.2, 0.2) [] {\color{textcolor}{$S_1$}};

\node at (3,-3) [fill, circle, inner sep=2pt, label=0:$c_2$] (c2) {};
\node at (3.25,-2.3) [fill, circle, inner sep=2pt, label=0:$q_1$] (q) {};
\node at (0,-0.7) [fill, circle, inner sep=2pt, label=180:$a$] (a) {};
\node at (0.25,0) [fill, circle, inner sep=2pt, label=115:$p_1$] (p1) {};
\node at (0.5,0.7) [fill, circle, inner sep=2pt, label=115:$p_2$] (p2) {};

\draw (q) -- (p2) -- (p1) -- (c2) -- (q) -- (p1) -- (a) -- (c2);

\node at (-0.24,-0.24) [color=textcolor] {{\small $\delta$}};
\end{tikzpicture}

a) \hspace{6cm} b)
\end{center}
\caption{a) An $L_\infty$-Delaunay triangulation with stretch factor arbitrarily
close to $\sqrt{4+2\sqrt{2}}$. b) A closer look at the first few faces of
this triangulation.}
\label{fi:L1_pire}
\end{figure}
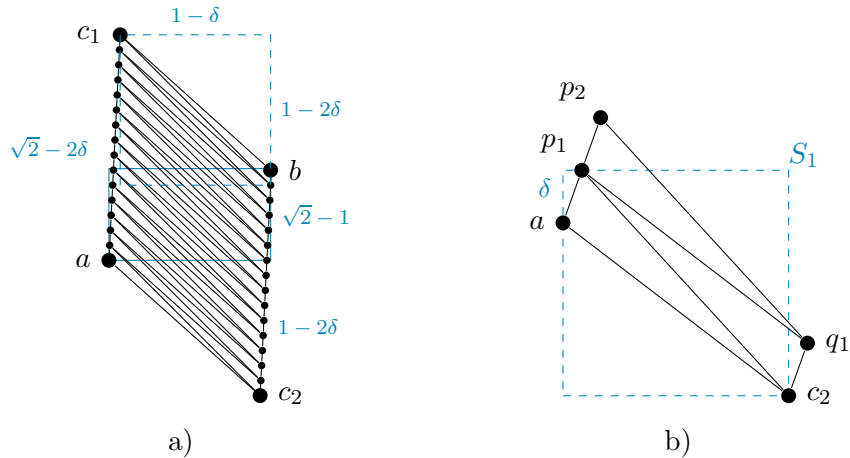

We end this section by giving a lower bound on the stretch factor of
$L_\infty$-Delaunay triangulations.

\begin{proposition}
  \label{le:lower_bound}
  For every $\eps > 0 $, there exists a set of points $P$ in the
  plane such that the $L_\infty$-Delaunay triangulation on $P$ has
  stretch factor at least
$$\sqrt{4+2\sqrt{2}}-\eps~.$$
\end{proposition}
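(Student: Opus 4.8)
The plan is to exhibit, for each small $\delta>0$, an explicit point set $P=P(\delta)$ whose $L_\infty$-Delaunay triangulation contains a source--destination pair $a,b$ whose graph distance is forced to be close to $(1+\sqrt2)\,d_x(a,b)+d_y(a,b)$, the extremal value of the (matching) upper bound. Since the quotient $\bigl((1+\sqrt2)x+y\bigr)/\sqrt{x^2+y^2}$ is maximized, over $0\le y\le x$, at $y/x=\sqrt2-1$, where it equals exactly $\sqrt{4+2\sqrt2}$, I would place $a$ and $b$ at opposite corners of a rectangle whose horizontal side is $1$ and whose vertical side is $\sqrt2-1$ (after a harmless normalization), so that $d_2(a,b)\to\sqrt{4-2\sqrt2}$ while the forced path length tends to $(1+\sqrt2)\cdot 1+(\sqrt2-1)=2\sqrt2$. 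The ratio then tends to $2\sqrt2/\sqrt{4-2\sqrt2}=\sqrt{4+2\sqrt2}$, and taking $\delta$ small enough beats $\sqrt{4+2\sqrt2}-\eps$.

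Concretely, following Figure~\ref{fi:L1_pire}, I would build two almost-vertical ``staircase'' chains --- a left chain climbing from $a$ up to an anchor $c_1$ and a right chain climbing from an anchor $c_2$ up to $b$ --- together with a family of nearly-$45^\circ$ cross edges joining the two chains. The points would be spaced by $O(\delta)$ along each chain and their number $n$ sent to infinity, so that the staircase becomes arbitrarily fine; the two far anchors $c_1,c_2$ (and the boundary offsets measured by $\delta$ in the figure) serve only to force general position and the correct empty-square incidences at the two ends, while the absence of any point in the interior of the strip rules out a cheap diagonal shortcut through the middle. The first step is then to prove that this really is the $L_\infty$-Delaunay triangulation of $P$: for every edge I claim I would produce an explicit axis-aligned empty square carrying its two endpoints on its boundary (the square $S_1$ of Figure~\ref{fi:L1_pire}(b) is the prototype), and conversely verify that no other pair of points admits such an empty square, so that no shortcut edge appears.

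With the triangulation pinned down, the second step is to lower-bound $d_T(a,b)$. I would argue that any $a$--$b$ walk is confined to the triangulated strip and must therefore alternate between (near-)vertical chain segments and (near-)$45^\circ$ cross segments. A single cross segment covering horizontal distance $h$ has length $h\sqrt2$ and simultaneously changes the height by $h$, whereas the chain segments account, at unit rate, for the residual vertical progress; summing over an arbitrary such walk gives total length at least $(1+\sqrt2)\,d_x(a,b)+d_y(a,b)-O(\delta)$. The clean special case is instructive: if the walk switches chains once, at height $\eta$, its length is $\eta+\sqrt2\,d_x+(d_y-\eta+d_x)=(1+\sqrt2)d_x+d_y$ independently of $\eta$, which is exactly why the bound is robust. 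Letting $\delta\to0$ and $n\to\infty$ then finishes the proof.

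The main obstacle is this last lower bound on $d_T(a,b)$: displaying one long path is easy, but one must rule out \emph{every} shorter route through the strip, in particular paths that backtrack horizontally or exploit the anchors $c_1,c_2$. I would handle this with an exchange/monotonicity argument showing that a shortest $a$--$b$ path can be taken to cross the strip exactly once (each extra crossing wastes a matched up-left/down-right pair of $45^\circ$ edges and so only adds length), reducing the estimate to the one-crossing computation above. The secondary difficulty, more bookkeeping than conceptual, is to track the $O(\delta)$ errors introduced by the general-position perturbations and by the small deviations of the cross edges from a perfect $45^\circ$ slope, and to check that they vanish cleanly in the limit.
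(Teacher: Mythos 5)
Your construction is exactly the one the paper uses --- the two near-vertical chains of Figure~\ref{fi:L1_pire} anchored at $c_1$ and $c_2$, the explicit empty squares (the paper's $S_i$) certifying the cross edges, and the same limiting ratio $2\sqrt{2}/\sqrt{4-2\sqrt{2}}=\sqrt{4+2\sqrt{2}}$ --- so the proposal matches the paper's proof in all essentials. If anything, your exchange/one-crossing argument for lower-bounding $d_T(a,b)$ is more careful than the paper, which simply exhibits the path $a,p_1,\dots,p_k,c_1,b$ and asserts it is shortest; your observation that the cost of a single chain switch is independent of the crossing height is the right justification for that assertion.
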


This lower bound applies, of course, to $L_1$-Delaunay triangulations
as well.  The proof of this proposition relies on the example of
Figure~\ref{fi:L1_pire}.

\begin{proof}
  Given $\delta > 0$, we define the set of points $P$ as follows.  Let
  point $a$ be the origin and let points $b$, $c_1$, and $c_2$ have
  coordinates $(1, \sqrt{2}-1)$, $(\delta, \sqrt{2}-2\delta)$, and
  $(1-\delta, 1-2\delta)$, respectively. Additional $k =
  {{\sqrt{2}-2\delta} \over \delta} - 1$ points are placed on line
  segment $[ac_1]$ and another $k$ on line segment $[c_2b]$ in such a
  way that the difference in $y$ coordinates between successive points
  on a segment is $\delta$, as shown in
  Figures~\ref{fi:L1_pire}. (W.l.o.g. assume that $\sqrt{2} \over
  \delta$ and thus $k$ is an integer so that this can be done.)  Let
  $a = p_0, p_1, p_2, p_3, \dots, p_k, p_{k+1}=c_1$ be the labels, in
  order as they appear when moving from $a$ to $c_1$, of the points on
  segment $[ac_1]$ and let $c_2=q_0, q_1, q_2, q_3, \dots, q_{k+1}=b$
  be the labels, in order as they appear when moving from $c_2$ to
  $b$, of the points on segment $[c_2b]$, as illustrated in
  Figure~\ref{fi:L1_pire}.

  Consider the square $S_1$ of side length $1-\delta$ and having $a$
  and $p_1$ on its west (left) and north sides, respectively (see
  Figure~\ref{fi:L1_pire}b)). Since $d_\infty(a,c_2) = d_x(a,c_2) =
  1-\delta$ and $d_\infty(p_1,c_2) = d_y(p_1,c_2) = 1-\delta$, point
  $c_2$ is exactly the southeast vertex of square $S_1$.  By symmetry,
  it follows that for every $i = 0, 1, 2, \dots, k$, if $S_i$ is the
  square of side length $1-\delta$ with $p_i$ and $p_{i+1}$ on its
  west and north sides, then point $q_i$ is exactly the southeast
  vertex of $S_i$.  This means that all points $q_j$ with $j \not= i$
  as well as all points $p_j$ with $j \not= i, i+1$ must lie outside
  $S_i$. Therefore, for every $i = 0, 1, 2, \dots, k$, points $p_i$,
  $p_{i+1}$, and $q_i$ define a triangle in the $L_\infty$-Delaunay
  triangulation $T$ on $P$. A similar argument shows that the path
  $q_0, q_1, \dots, q_{k+1}$ is in triangulation $T$ as well.  The
  triangulation $T$ is illustrated in Figure~\ref{fi:L1_pire}a).

  Having defined the set of points $P$ and described its
  $L_\infty$-Delaunay triangulation $T$, we now analyze the stretch
  factor of $T$. A shortest path from $a$ to $b$ in $T$ is, for
  example, $a, p_1, p_2, \dots, p_k, c_1, b$. The length of this path
  is
  \begin{eqnarray*}
    d_2(a,c_1) + d_2(c_1,b) & = & \sqrt{d_x(a,c_1)^2+d_y(a,c_1)^2} + \sqrt{d_x(c_1,b)^2+d_y(c_1,b)^2} \\
    & = & \sqrt{(\sqrt{2}-\delta)^2 + \delta^2} + \sqrt{(1-\delta)^2 + (1-2\delta)^2}
  \end{eqnarray*}
  which tends to $2\sqrt{2}$ as $\delta$ tends to $0$. The Euclidean
  distance between $a$ and $b$ is:
$$
d_2(a,b) = \sqrt{1^2+(\sqrt{2}-1)^2} = \sqrt{4-2\sqrt{2}}~.
$$
Therefore, it is possible to construct a $L_\infty$-Delaunay
triangulation whose stretch factor is arbirtrarily close to:
$$
{2\sqrt{2} \over \sqrt{4-2\sqrt{2}}} = \sqrt{4+2\sqrt{2}}~.
$$
\end{proof}


\section{Main result}

In this section we obtain a tight upper bound on the stretch factor of
an $L_\infty$-Delaunay triangulation. It follows from this
key theorem: 

\begin{theorem}
\label{th:squareDel}
Let $T$ be the $L_\infty$-Delaunay triangulation on a set of points
$P$ in the plane and let $a$ and $b$ be any two points of $P$. If $x =
d_\infty(a, b) = \max \{d_x(a,b), d_y(a,b)\}$ and $y = \min
\{d_x(a,b), d_y(a,b)\}$ then
$$d_T(a, b) ~\le~ (1+\sqrt{2})x + y$$
where $d_T(a,b)$ denotes the distance between $a$ and $b$ in
triangulation $T$.
\end{theorem}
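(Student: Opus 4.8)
The plan is to prove the bound

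$$d_T(a,b) \le (1+\sqrt{2})x + y$$

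by induction on the number of points of $P$ contained in a suitable bounding region determined by $a$ and $b$. Without loss of generality, I would orient the configuration so that $b$ lies in cone $0$ of $a$ (i.e. $b$ is to the upper-right of $a$), with $x = d_x(a,b)$ the horizontal separation and $y = d_y(a,b) \le x$ the vertical separation. The natural region to work inside is the axis-parallel rectangle $R$ with $a$ and $b$ as opposite corners, of width $x$ and height $y$; the inductive claim should really be phrased for an arbitrary horizontal rectangle of length $x$ and depth $y \le x$ whose boundary contains the source and destination, exactly as foreshadowed in the authors' ``Our contributions'' paragraph.

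\textbf{Base case and the square that forces an edge.} The base case is when $(a,b)$ is itself an edge of $T$: then $d_T(a,b) = d_2(a,b) = \sqrt{x^2+y^2} \le \sqrt{2}\,x \le (1+\sqrt{2})x+y$, so the bound holds trivially. For the inductive step, I would consider the smallest empty axis-parallel square $S$ witnessing the Delaunay relationship between $a$ and $b$. Because $P$ is in general position and $(a,b)$ need not be an edge, $S$ will contain some point of $P$ on its relevant boundary and at least one Delaunay triangle incident to $a$ will lie inside the $x\times x$ square anchored at $a$. The key local fact is that the Delaunay triangle containing the direction toward $b$ gives a neighbor $w$ of $a$ such that $(a,w)\in T$, where $w$ lies on the boundary of an empty square anchored at $a$.

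\textbf{The inductive advance and the two-path strategy.} The heart of the argument is choosing the next vertex $w$ so that the residual distance $d_\infty(w,b)$ strictly decreases the relevant potential while the edge cost $d_2(a,w)$ is paid for by the decrease in the right-hand side $(1+\sqrt{2})x+y$. The authors signal that the route is built by ``maintaining two possible short paths until we reach some special point (an \emph{inductive} point) where we can apply the main inductive hypothesis.'' Concretely, I would follow the edge $(a,w)$ of the first Delaunay triangle toward $b$; this either brings us strictly inside a smaller rectangle (allowing immediate induction on the point count), or it reaches an inductive point from which the two candidate subpaths can be combined. The accounting must show that the $\sqrt{2}$ coefficient on $x$ is exactly what is needed to absorb the diagonal traversal of an empty square of side at most $x$: a single edge across such a square costs at most $\sqrt{2}\,x$, matching the coefficient, and this is the source of the $1+\sqrt{2}$ rather than a smaller constant. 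The constant is tight precisely because of the staircase construction in Proposition~\ref{le:lower_bound}, where each square is traversed near-diagonally.

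\textbf{Main obstacle.} The hard part will be the bookkeeping in the inductive step: one must verify that, whatever the local triangle structure around $a$ inside the anchoring square, the chosen next vertex $w$ satisfies both $d_2(a,w) + [(1+\sqrt{2})x' + y'] \le (1+\sqrt{2})x + y$ for the new parameters $(x',y')$ and that the new rectangle contains strictly fewer points, so the induction terminates. Case analysis on where $w$ falls relative to the rectangle's sides (and whether the advance is ``horizontal'' or ``vertical'') is where all the geometry lives, and handling the situation when $w$ overshoots $b$ in one coordinate — requiring the two-path merging — is the delicate point. I expect the empty-square Delaunay property to be invoked repeatedly to guarantee that the triangle edges stay inside the anchoring square, keeping each edge cost bounded by the appropriate multiple of the current $x$.
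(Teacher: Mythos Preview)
Your proposal picks up the right buzzwords from the introduction (two paths, an ``inductive'' point) but does not contain the mechanism that actually makes the proof go through, and the concrete step you do describe is flawed.

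First, a local error: there is no ``smallest empty axis-parallel square $S$ witnessing the Delaunay relationship between $a$ and $b$'' when $(a,b)\notin T$; the absence of such a square is precisely what it means for $(a,b)$ not to be an edge. More importantly, the greedy scheme you outline---take one Delaunay edge $(a,w)$ toward $b$, pay $d_2(a,w)$, and recurse on $(w,b)$---does not yield the inequality $d_2(a,w)+(1+\sqrt{2})x'+y'\le(1+\sqrt{2})x+y$ in general. A single edge out of $a$ can be almost vertical and long while barely advancing horizontally, so the budget drop $(1+\sqrt{2})(x-x')+(y-y')$ need not cover it. This is exactly why the paper does \emph{not} recurse after one step.

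What the paper actually does in the hard case (when the rectangle $R(a,b)$ is empty) is walk the entire sequence of triangles $T_1,\dots,T_k$ crossed by $[ab]$, tracking both the upper chain $h_1,h_2,\dots$ and the lower chain $l_1,l_2,\dots$ simultaneously. The controlling invariant is a \emph{potential} on the square $S_i$:
\[
d_T(a,h_i)+d_T(a,l_i)+d_{S_i}(h_i,l_i)\ \le\ 4x_i,
\]
where $x_i$ is the abscissa of the east side of $S_i$ and $d_{S_i}$ is arc length along $\partial S_i$. This invariant is preserved as long as the crossing edge $(h_i,l_i)$ is \emph{steep}; the first time it becomes \emph{gentle} (slope in $[-1,1]$) defines the inductive square, and only then is induction applied. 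The $(1+\sqrt{2})$ arises from combining the $2x_c$ bound that the potential gives for a ``promising'' point on the east side with the $\sqrt{2}\,\Delta x$ cost of the gentle crossing edge---not from a single diagonal of a square, as you suggest. Your plan has none of this machinery (the potential function, the steep/gentle dichotomy, the promising-point lemma, or the monotone NE/SE path that repairs overshooting), and without it the inductive accounting does not close.
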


\begin{corollary}
  The stretch factor of the $L_1$- and the 
  $L_\infty$-Delaunay triangulation on a set of points $P$ is at most
  $$\sqrt{4+2\sqrt{2}} ~\approx~ 2.6131259\dots$$
\end{corollary}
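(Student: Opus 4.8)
The plan is to derive the corollary directly from Theorem~\ref{th:squareDel} by a one-variable optimization. By definition, the stretch factor of $T$ is the supremum over all pairs $a,b\in P$ of the ratio $d_T(a,b)/d_2(a,b)$, and since the $L_1$-Delaunay triangulation is congruent to an $L_\infty$-Delaunay triangulation under a $45^\circ$ rotation (as noted in the preliminaries, and rotations preserve Euclidean distances), it suffices to bound this ratio in the $L_\infty$ case. Writing $x=\max\{d_x(a,b),d_y(a,b)\}$ and $y=\min\{d_x(a,b),d_y(a,b)\}$, Theorem~\ref{th:squareDel} gives $d_T(a,b)\le(1+\sqrt{2})x+y$, while $d_2(a,b)=\sqrt{x^2+y^2}$, so
$$\frac{d_T(a,b)}{d_2(a,b)} ~\le~ \frac{(1+\sqrt{2})x+y}{\sqrt{x^2+y^2}}.$$

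First I would homogenize. The right-hand side is invariant under the scaling $(x,y)\mapsto(\lambda x,\lambda y)$, so I set $t=y/x$, which ranges over $[0,1]$ because $0\le y\le x$ by construction. With $c=1+\sqrt{2}$ the bound becomes $f(t)=\dfrac{c+t}{\sqrt{1+t^2}}$, and the task reduces to maximizing $f$ on the interval $[0,1]$.

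Next I would differentiate. A short computation gives
$$f'(t) ~=~ \frac{1-ct}{(1+t^2)^{3/2}},$$
so the unique critical point in $(0,1)$ is $t^\ast=1/c=\sqrt{2}-1\approx 0.414$, with $f$ increasing for $t<t^\ast$ and decreasing for $t>t^\ast$; hence $t^\ast$ is the global maximum on $[0,1]$. Evaluating, and using $c+1/c=(c^2+1)/c$ together with $\sqrt{1+1/c^2}=\sqrt{c^2+1}/c$, one finds $f(t^\ast)=\sqrt{c^2+1}$. Since $c^2=(1+\sqrt{2})^2=3+2\sqrt{2}$, this equals $\sqrt{4+2\sqrt{2}}$, exactly the claimed bound.

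The main point is that all the real work lives in Theorem~\ref{th:squareDel}; once that inequality is in hand, the corollary is an elementary calculus exercise with no genuine obstacle. The one detail worth checking carefully is that the optimizing value $t^\ast=\sqrt{2}-1$ actually lies in the admissible interval $[0,1]$ (it does), which is also consistent with the lower-bound construction of Figure~\ref{fi:L1_pire}, whose extremal source--destination pair realizes precisely this aspect ratio and thus shows the bound $\sqrt{4+2\sqrt{2}}$ cannot be improved.
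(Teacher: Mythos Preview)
Your proof is correct and follows essentially the same approach as the paper: both reduce the corollary to maximizing $\frac{(1+\sqrt{2})x+y}{\sqrt{x^2+y^2}}$ over $0<y\le x$ and identify the maximum as $\sqrt{4+2\sqrt{2}}$, with the $L_1$ case handled via the $45^\circ$ rotation. Your treatment is in fact more careful---you explicitly carry out the one-variable calculus and verify that the optimizer $t^\ast=\sqrt{2}-1$ lies in $[0,1]$---whereas the paper simply asserts the location of the maximum (and indeed writes $y/x=1+\sqrt{2}$, which should read $x/y=1+\sqrt{2}$).
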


\begin{proof}
  By Theorem~\ref{th:squareDel}, an upper-bound of the stretch factor
  of an $L_\infty$-Delaunay triangulation is the maximum of the function
  $$
  {{(1+\sqrt{2})x + y} \over \sqrt{x^2+y^2}}
  $$
  over values $x$ and $y$ such that $0 < y \leq x$. The maximum is
  reached when $x$ and $y$ satisfy $y/x = 1+\sqrt{2}$, and the maximum
  is equal to $\sqrt{1 + (1+\sqrt{2})^2} = \sqrt{4+2\sqrt{2}}$. As
  $L_1$- and $L_\infty$-Delaunay triangulations have the same stretch
  factor, this result also holds for $L_1$-Delaunay triangulations.
\end{proof}

To prove Theorem~\ref{th:squareDel}, we will construct a bounded
length path in $T$ between two arbitrary points $a$ and $b$ of $P$.
To simplify the notation and the discussion, we assume that point $a$
has coordinates $(0, 0)$ and point $b$ has coordinates $(x, y)$ with
$0 < y \leq x$. The segment $[ab]$ divides the Euclidean plane into
two half-planes; a point in the same half-plane as point $(0, 1)$ is
said to be \emph{above} segment $[ab]$, otherwise it is
\emph{below}. Let $T_1, T_2, T_3, \dots, T_k$ be the sequence of
triangles of triangulation $T$ that line segment $[ab]$ intersects
when moving from $a$ to $b$.  Let $h_1$ and $l_1$ be the nodes of
$T_1$ other than $a$, with $h_1$ lying above segment $[ab]$ and $l_1$
and lying below. Every triangle $T_i$, for $1 < i < k$, intersects
line segment $[ab]$ twice; let $h_i$ and $l_i$ be the endpoints of the
edge of $T_i$ that intersects segment $[ab]$ last, when moving on
segment $[ab]$ from $a$ to $b$, with $h_i$ being above and $l_i$ being
below segment $[ab]$. Note that either $h_i = h_{i-1}$ and $T_i =
\triangle(h_i, l_i, l_{i-1})$ or $l_i = l_{i-1}$ and $T_i =
\triangle(h_{i-1}, h_i, l_i)$, for $1 < i < k$. We also set $h_0 = l_0
= a$, $h_k = b$, and $l_k=l_{k-1}$. For $1 \leq i \leq k$, we define
$S_i$ to be the square whose sides pass through the three vertices of
$T_i$ (see Figure~\ref{fi:squares}); since $T$ is an
$L_\infty$-Delaunay triangulation, the interior of $S_i$ is devoid of
points of $P$.  We will refer to the sides of the square using the
notation: N (north), E (east), S (south), and W (west). We will also
use this notation to describe the position of an edge connecting two
points lying on two sides a square: for example, a WN edge connects a
point on the west and a point on the N side.  We will say that an edge
is \emph{gentle} if the line segment corresponding to it in the graph
embedding has a slope within $[-1, 1]$; otherwise we will say that it
is \emph{steep}.

\begin{figure}[b!]
\begin{center}
\begin{tikzpicture}

\draw [color=textcolor, dashed] (0,0) rectangle (1.5, 1.5);
\draw (0.2, 1.3) node [label=135:{\color{textcolor}{$S_1$}}] {};

\draw [color=textcolor, dashed] (0.5,0) rectangle (2.75, 2.25);
\draw (0.7, 2.05) node [label=135:{\color{textcolor}{$S_2$}}] {};

\draw [color=textcolor, dashed] (1, -.75) rectangle (4, 2.25);
\draw (3.9, 2.15) node [label=45:{\color{textcolor}{$S_3$}}] {};

\draw [color=textcolor] (0,0.35) rectangle (6, 1.15);

\draw (6, 1.15) node [fill, circle, inner sep=2pt, label=0:$b$] (b) {};

\draw (0,0.35) node [fill, circle, inner sep=2pt, label=180:$a$] (a) {};

\draw (0.5,1.5) node [fill, circle, outer sep=-4pt, inner sep=2pt, label=135:$h_1$] (h1) {};
\draw (1,0) node [fill, circle, inner sep=2pt, label=225:$l_{1,2}$] (l1) {};
\draw (2, 2.25) node [fill, circle, inner sep=2pt, label=90:$h_{2,3}$] (h2) {};

\draw (4,-.5) node [fill, circle, inner sep=2pt, label=0:$l_3$] (l2) {};
\draw (a) -- (h1) -- (h2) -- (l2) -- (l1) -- (a);
\draw (h1) -- (l1) -- (h2) -- (l2);

\end{tikzpicture}
\end{center}
\caption{Triangles $T_1$ (with points $a$, $h_1$, $l_1$), $T_2$ (with points
$h_1$, $h_2$, and $l_2$), and $T_3$ (with points $l_2$, $h_3$, and $l_3$) and
associated squares $S_1$, $S_2$, and $S_3$. When traveling from $a$ to $b$
along segment $[a,b]$, the edge that is hit when leaving $T_i$ is
$(h_i, l_i)$.}
\label{fi:squares}
\end{figure}
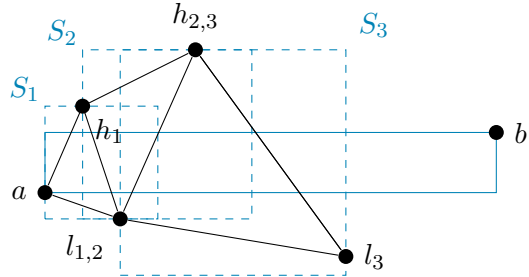

We will prove Theorem~\ref{th:squareDel} by induction on the distance,
using the $L_\infty$-metric, between $a$ and $b$. Let $R(a,b)$ be the
rectangle with sides parallel to the $x$ and
$y$ axes and with vertices at points $a$ and $b$. If there is a point of
$P$ inside $R(a,b)$, we will easily apply induction. The case when
$R(a,b)$ does not contain points of $P$ --- and in particular the
points $h_i$ and $l_i$ for $0 < i < k$ --- is more difficult and we
need to develop tools to handle it. The following Lemma describes the
structure of the triangles $T_1, \dots, T_k$ when $R(a,b)$ is
empty. We need some additional terminology first though: we say that a
point $u$ is \emph{above} (resp. \emph{below}) $R(a,b)$ if $0 < x_u < x$
and $y_u > y$ (resp. $y_u < 0$).

\begin{lemma}
  \label{le:properties}
  If $(a,b) \not\in T$ and no point of $P$ lies inside rectangle
  $R(a,b)$, then point $a$ lies on the W side of square $S_1$, point
  $b$ lies on the E side of square $S_k$, points $h_1, \dots, h_k$ all
  lie above $R(a, b)$, and points $l_1, \dots, l_k$ all lie below
  $R(a, b)$. Furthermore, for any $i$ such that $1 < i < k$:
\begin{enumerate}[\indent a)]

\item Either $T_i = \triangle(h_{i-1}, h_i, l_{i-1}=l_i)$, points $h_{i-1}$,
$h_i$, and $l_{i-1}=l_i$ lie on the sides of $S_i$ in clockwise order,
and $(h_{i-1}, h_i)$ is a WN, WE, or NE edge in $S_i$

\item Or $T_i = \triangle(h_{i-1}=h_i, l_{i-1}, l_i)$, points $h_{i-1}=h_i$,
$l_i$, and $l_{i-1}$ lie on the sides of $S_i$ in clockwise order, and
$(l_{i-1}, l_i)$ is a WS, WE, or SE edge in $S_i$.

\end{enumerate}
\end{lemma}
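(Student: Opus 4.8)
The plan is to exploit the emptiness of $R(a,b)$ together with the defining empty-square property of $L_\infty$-Delaunay triangulations. I would set up the proof as follows.

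First I would establish the boundary claims. Since $(a,b) \notin T$ and no point of $P$ lies inside $R(a,b)$, I want to show $a$ lies on the W side of $S_1$ and $b$ on the E side of $S_k$. The triangle $T_1$ has vertices $a$, $h_1$, $l_1$, and $S_1$ is the empty square passing through all three. The key observation is that $h_1$ and $l_1$ cannot lie inside $R(a,b)$ (by assumption), so $h_1$ is above and $l_1$ is below $R(a,b)$; combined with the fact that $S_1$ is empty and that the segment $[ab]$ leaves $T_1$ through edge $(h_1,l_1)$, the only consistent placement forces $a$ onto the W side. I would argue this by ruling out $a$ on the N or S side: if $a$ were on the N side, the interior of $S_1$ would dip below $a$ toward $b$ and, because $[ab]$ goes rightward into $R(a,b)$, a portion of $R(a,b)$ near $a$ would lie inside $S_1$, but then the empty square $S_1$ would have to avoid all points — this is fine for emptiness but contradicts $h_1,l_1$ straddling $[ab]$ while $a$ is a corner-type vertex. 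The cleanest route is a direct geometric case analysis on which sides of $S_1$ the three vertices occupy.

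Next, and this is where the main work lies, I would prove the structural dichotomy (a)/(b) for each interior index $i$. The paper already records that either $h_i = h_{i-1}$ (so $T_i = \triangle(h_{i-1}, h_i, l_i)$ degenerates to sharing the high vertex) or $l_i = l_{i-1}$. Take the case $h_i = h_{i-1}$, giving $T_i = \triangle(h_{i-1}=h_i, l_{i-1}, l_i)$ as in part (b). I must show the three vertices sit on the sides of $S_i$ in clockwise order and that $(l_{i-1}, l_i)$ is a WS, WE, or SE edge. Here I would use that all $h_j$ are above $R(a,b)$ and all $l_j$ are below it, so the shared high vertex $h_{i-1}=h_i$ sits on the N (or possibly W/E) side while $l_{i-1}, l_i$ sit low; emptiness of $S_i$ then pins down their cyclic arrangement. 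The constraint that $[ab]$ crosses edge $(l_{i-1}, l_i)$ last, together with $l_{i-1}, l_i$ both being below $R(a,b)$, restricts this edge to connect the W, S, or E sides — yielding exactly WS, WE, or SE. The symmetric argument handles part (a).

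The main obstacle I anticipate is the case analysis establishing the cyclic (clockwise) ordering of the three vertices on $S_i$ and excluding the spurious side-assignments (e.g. ruling out that $(l_{i-1},l_i)$ is an NS or NE edge). This requires carefully combining three facts: the empty-square property forbidding points of $P$ in the interior of $S_i$, the monotone left-to-right progression of the triangles as $[ab]$ is traversed, and the global position constraints that $h_j$ lie above and $l_j$ below $R(a,b)$. I would organize this as a small number of subcases keyed on which side of $S_i$ the shared vertex occupies, and in each subcase derive the admissible positions of the remaining two vertices, discarding any configuration in which the unused two points of $P$ would be forced into the empty interior of $S_i$ or would make $[ab]$ exit $T_i$ through the wrong edge.
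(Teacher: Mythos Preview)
Your overall strategy matches the paper's: use general position to place the three vertices of each $T_i$ on three distinct sides of $S_i$, then use the above/below--$R(a,b)$ information to eliminate impossible side assignments. The paper's execution is crisper than your sketch in two places, and one of your stated constraints is actually wrong.

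First, for the claim that $a$ lies on the W side of $S_1$: the paper's argument is simply that since $[ab]$ enters the interior of $S_1$ at $a$ and $b$ lies in the first quadrant of $a$, point $a$ can only sit on the W or S side; if $a$ were on the S side, then $l_1$ (which is below $[ab]$ but on the boundary of $S_1$, hence with nonnegative ordinate) would be forced into $R(a,b)$, contradicting emptiness. Your version of this step is vague and does not isolate this contradiction.

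Second, and more importantly, in your treatment of case~(b) you write that ``$[ab]$ crosses edge $(l_{i-1}, l_i)$ last''. This is false: both $l_{i-1}$ and $l_i$ lie below $[ab]$, so the segment $[ab]$ never meets the edge $(l_{i-1},l_i)$ at all; the entry and exit edges of $T_i$ are $(h_i,l_{i-1})$ and $(h_i,l_i)$. The correct constraint, which the paper uses, is purely positional: since $S_i$ has some $l$-point on its boundary, its S side lies below $R(a,b)$, so no $h$-point can sit there; symmetrically no $l$-point can sit on the N side. Combining this with the clockwise order $h_i,\,l_i,\,l_{i-1}$ (which follows from the above/below labeling and the left-to-right traversal, not from emptiness of $S_i$ as you suggest) leaves exactly the WS, WE, SE possibilities for $(l_{i-1},l_i)$. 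Once you replace your mistaken crossing constraint with this side-exclusion argument, your proof goes through and coincides with the paper's.
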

These properties are illustrated in Figure~\ref{fi:squares}. 

\begin{proof}
  Since points of $P$ are in general position, points $a$, $h_1$, and
  $l_1$ must lie on 3 different sides of $S_1$. Because segment $[ab]$
  intersects the interior of $S_1$ and since $a$ is the origin and $b$
  is in cone 0 of $a$, $a$ can only lie on the W or S side of
  $S_1$. If $a$ lies on the S side then $l_1 \not= b$ would have to
  lie inside $R(a, b)$, which is a contradiction. Therefore $a$ lies
  on the W side of $S_1$ and, similarly, $b$ lies on the E side of
  $S_k$.

  Since points $h_i$ ($0 < i < k$) are above segment $[ab]$ and points
  $l_i$ ($0 < i < k$) are below segment $[ab]$, and because all
  squares $S_i$ ($0 < i < k$) intersect $[ab]$, points $h_1, \dots,
  h_k$ all lie above $R(a, b)$, and points $l_1, \dots, l_k$ all lie
  below $R(a, b)$.

  The three vertices of $T_i$ can be either $h_i=h_{i-1}$, $l_{i-1}$,
  and $l_i$ or $h_{i-1}$, $h_i$, and $l_{i-1} = l_i$. Because points
  of $T$ are in general position, the three vertices of $T_i$ must
  appear on three different sides of $S_i$.  Finally, because
  $h_{i-1}$ and $h_i$ are above $R(a,b)$, they cannot lie on the S
  side of $S_i$, and because $l_{i-1}$ and $l_i$ are below $R(a,b)$,
  they cannot lie on the N side of $S_i$.

  If $T_i = \triangle(h_{i-1}, h_i, l_{i-1}=l_i)$, points $h_{i-1}$,
  $h_i$, $l_i$ must lie on the sides of $S_i$ in clockwise order. The
  only placements of points $h_{i-1}$ and $h_i$ on the sides of $S_i$
  that satisfy all these constraints are as described in {\em a)}.  If
  $T_i = \triangle(h_{i-1}=h_i, l_{i-1}, l_i)$, points $h_i$, $l_i$,
  $l_{i-1}$ must lie on the sides of $S_i$ in clockwise order. Part
  {\em b)} lists the placements of points $l_{i-1}$ and $l_i$ that
  satisfy the constraints.
\end{proof}

In the following definition, we define the points on which
induction can be applied in the proof of Theorem~\ref{th:squareDel}.
\begin{definition}
Let $R(a,b)$ be empty. Square $S_j$ is \emph{inductive} if edge
$(l_j, h_j)$ is gentle. The point $c = h_j$ or $c = l_j$ with the larger
abcissa is the \emph{inductive} point of inductive square $S_j$.
\end{definition}

The following lemma will be the key ingredient of our inductive proof
of Theorem~\ref{th:squareDel}.

\begin{lemma}
  \label{le:inductive}
  Assume that $R(a,b)$ is empty.  If no square $S_1, \dots, S_k$ is
  inductive then $$d_T(a, b) ~\le~ (1 + \sqrt{2}) x + y~.$$ Otherwise
  let $S_j$ be the first inductive square in the sequence $S_1, S_2,
  \dots, S_k$. If $h_j$ is the inductive point of $S_j$ then
$$d_T(a, h_j) + (y_{h_j}-y) ~\le~ (1 + \sqrt{2}) x_{h_j}~.$$
If $l_j$ is the inductive point of $S_j$ then
$$d_T(a, l_j) - y_{l_j} ~\le~ (1 + \sqrt{2}) x_{l_j}~.$$
\end{lemma}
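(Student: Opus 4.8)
The plan is to prove Lemma~\ref{le:inductive} by a careful analysis of the sequence of squares $S_1,\dots,S_k$, maintaining two candidate paths as we move from $a$ toward $b$ — an upper path through the $h_i$ and a lower path through the $l_i$ — and tracking the lengths of both. By Lemma~\ref{le:properties} we know $a$ lies on the W side of $S_1$, $b$ on the E side of $S_k$, and each transition from $T_{i-1}$ to $T_i$ either advances the upper path (case a), replacing $h_{i-1}$ by $h_i$ with a WN/WE/NE edge) or advances the lower path (case b), replacing $l_{i-1}$ by $l_i$ with a WS/WE/SE edge). Since no $S_i$ before $S_j$ is inductive, every edge $(l_i,h_i)$ with $i<j$ is \emph{steep}, which tightly constrains the geometry: the two points of the hit edge are nearly vertically stacked, so each square advances the $x$-coordinate only modestly relative to how much it extends vertically.

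First I would set up the bookkeeping. For each square $S_i$ let $s_i$ denote its side length; because $(h_i,l_i)$ is steep and $h_i,l_i$ lie on the N/S (or adjacent) sides, the horizontal progress made within $S_i$ is at most $s_i$ while a full side of length $s_i$ is ``spent'' going up or down. The key quantitative claim I would isolate is a per-square inequality of the form: the cost added to whichever path is advanced, measured against the horizontal progress and the vertical excursion, never exceeds the $(1+\sqrt2)$ rate. Concretely, I expect to show that for a non-inductive (steep) square the newly added edge has length at most $s_i$, while it buys horizontal progress $\Delta x_i$ and the running vertical offset changes in a controlled way; summing these telescoping contributions should yield the bound. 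The constant $1+\sqrt2$ should emerge exactly from the worst case where a WE or NE/SE edge meets the square's corner at the critical slope.

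The first case of the lemma (no inductive square) is then handled by running the argument all the way to $S_k$: here $b$ sits on the E side of $S_k$, both endpoints are accounted for, and the two accumulated path lengths combine to give $d_T(a,b)\le(1+\sqrt2)x+y$ after bounding the final vertical correction by $y$. For the second statement, I would truncate the analysis at the first inductive square $S_j$. Since $S_j$ is the first square whose hit edge $(l_j,h_j)$ is gentle, all squares before it are steep and the per-square inequalities apply up through $S_{j-1}$; the inductive point $c$ (the one of $h_j,l_j$ with larger abscissa) is reached by one of the two maintained paths, and I read off the stated inequality for $d_T(a,h_j)$ or $d_T(a,l_j)$ by substituting $x_{h_j}$ or $x_{l_j}$ for $x$ and absorbing the vertical term $(y_{h_j}-y)$ or $-y_{l_j}$ as the appropriate boundary correction. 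The two displayed inequalities are really the same statement written from the upper versus lower endpoint, reflecting whether $c$ lies above or below.

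The main obstacle, I expect, is the per-square geometric inequality in the steep case and in particular verifying that the WE and corner-meeting (NE, SE) configurations from Lemma~\ref{le:properties} are genuinely the extremal ones that produce the $1+\sqrt2$ rate rather than something larger. This requires a delicate bound relating each added edge's Euclidean length to the horizontal advance $\Delta x_i$ together with the signed change in vertical offset, and checking that the ``steep'' hypothesis (slope outside $[-1,1]$ on the hit edge) forces enough horizontal budget to survive. A secondary subtlety is ensuring the two candidate paths stay consistent across the alternation between case a) and case b): when the advancing side switches from upper to lower, the shared vertex ($h_{i-1}=h_i$ or $l_{i-1}=l_i$) must be handled so that no length is double-counted and the telescoping of vertical offsets remains valid. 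Once these local estimates are nailed down, the global bound follows by straightforward summation.
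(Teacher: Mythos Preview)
Your plan has the right large-scale picture (two candidate paths, steep vs.\ gentle edges) but is missing the central quantitative device, and your guess about where $1+\sqrt2$ arises is off.

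The paper does \emph{not} prove a per-square inequality at rate $1+\sqrt2$. Instead it introduces a single potential
\[
d_T(a,h_i)+d_T(a,l_i)+d_{S_i}(h_i,l_i)\ \le\ 4x_i,
\]
where $d_{S_i}(h_i,l_i)$ is the clockwise boundary distance from $h_i$ to $l_i$ and $x_i$ is the $x$-coordinate of the E side of $S_i$. The point is that this \emph{sum} is preserved from $S_i$ to $S_{i+1}$ whenever $S_i$ is not inductive: steepness of $(h_i,l_i)$ forces the boundary term to grow by at most $4(x_{i+1}-x_i)$, and the newly added edge length is absorbed by the corresponding shrinkage of the boundary term. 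From this potential one gets $d_T(a,c)\le 2x_c$ whenever $c$ is \emph{promising} (lies on the E side of its square). In particular, if no square is inductive then $b$ itself is promising and $d_T(a,b)\le 2x$, which is strictly stronger than $(1+\sqrt2)x+y$; the $1+\sqrt2$ does not appear at all in this case.

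For the inductive case with, say, inductive point $h_j$, the paper routes through $l_j$: take the maximal low path back to the last promising $l_i$ (all WS edges, length $\le (x_{l_j}-x_{l_i})+(y_{l_i}-y_{l_j})$), use $d_T(a,l_i)\le 2x_{l_i}$, and then cross the \emph{gentle} edge $(l_j,h_j)$ of length $\le \sqrt2\,(x_{h_j}-x_{l_j})$. The $\sqrt2$ comes solely from this final gentle edge, and the gentle hypothesis $x_{l_j}+y_{h_j}-y_{l_j}\le x_{h_j}$ is exactly what collapses the remaining terms to $(1+\sqrt2)x_{h_j}$. Your sketch tries to reach $h_j$ along the \emph{upper} path and to extract $1+\sqrt2$ from the steep squares themselves; neither matches what actually happens, and without the $4x_i$ potential there is no obvious way to control the two path lengths jointly through the alternation of cases a) and b).
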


Given an inductive point $c$, we can use use Lemma~\ref{le:inductive}
to bound $d_T(a,b)$ and then apply induction to bound $d_T(b,c)$,
\emph{but only if} the position of point $c$ relative to the position
of point $b$ is \emph{good}, i.e., if $x-x_c \geq |y-y_c|$. If that is
not the case, we will use the following Lemma:
\begin{lemma}
\label{le:mono}
Let $R(a,b)$ be empty and let the coordinates of point $c = h_i$ or $c = l_i$
satisfy $0 < x-x_c < |y-y_c|$.
\begin{enumerate}[\indent a)]

\item If $c = h_i$, and thus $0 < x-x_{h_i} < y_{h_i}-y$, then there exists $j$,
with $i < j \leq k$ such that all edges in path $h_i, h_{i+1}, h_{i+2}, \dots,
h_j$ are NE edges in their respective squares and $x-x_{h_j} \geq y_{h_j}-y
\geq 0$.

\item If $c = l_i$, and thus $0 < x-x_{l_i} < y-y_{l_i}$, then there exists
$j$, with $i < j \leq k$ such that all edges in path $l_i, l_{i+1}, l_{i+2},
\dots, l_j$ are SE edges and $x-x_{l_j} \geq y-y_{l_j} \geq 0$.
\end{enumerate}
\end{lemma}

\begin{proof}
We only prove the case $c = h_j$ as the case $c = l_i$ follows using a
symmetric argument.

We construct the path $h_i, h_{i+1}, h_{i+2}, \dots, h_j$ iteratively.
If $h_i = h_{i+1}$, we just continue building the path from $h_{i+1}$.
Otherwise, $(h_i, h_{i+1})$ is an edge of $T_{i+1}$ which, by
Lemma~\ref{le:properties}, must be a WN, WE, or NE edge in square $S_{i+1}$.
Since the S side of square $S_{i+1}$ is below $R(a, b)$ and because 
$x-x_{h_i} < y_{h_i}-y$, point $h_i$ cannot be on the W side of $S_{i+1}$
(otherwise $b$ would be inside square $S_{i+1}$). Thus $(h_i, h_{i+1})$ is a
NE edge. If $x-x_{h_{i+1}} \geq y_{h_{i+1}}-y$ we stop, otherwise we continue
the path construction from $h_{i+1}$.
\end{proof}

We can now prove the main theorem.

\begin{proofof}{\bf Theorem~\ref{th:squareDel}.}
  The proof is by induction on the distance, using the
  $L_\infty$-metric, between points of $P$ (since $P$ is finite there
  is only a finite number of distances to consider).

  Let $a$ and $b$ be the two points of $P$ that are the closest
  points, using the $L_\infty$-metric. We assume w.l.o.g. that $a$ has
  coordinates $(0, 0)$ and $b$ has coordinates $(x, y)$ with $0 < y
  \leq x$. Since $a$ and $b$ are the closest points using the
  $L_\infty$-metric, the largest square having $a$ as a southwest
  vertex and containing no points of $P$ in its interior, which we
  call $S_a$ must have $b$ on its E side. Therefore $(a, b)$ is an
  edge in $T$ and $d_T(a, b) = d_2(a, b) \leq x + y \leq
  (1+\sqrt{2})x+y$.

  For the induction step, we again assume, w.l.o.g., that $a$ has
  coordinates $(0, 0)$ and $b$ has coordinates $(x, y)$ with $0 < y
  \leq x$.

\noindent {\bf Case 1: $R(a,b)$ is not empty.}

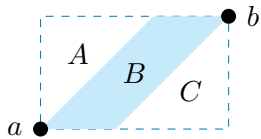
\begin{figure}[b!]
\begin{center}
\begin{tikzpicture}
\draw [color=textcolor,dashed] (0,0) rectangle (2.5, 1.5);

\fill [color=background] (0,0) -- (1.5, 1.5) -- (2.5,1.5) -- (1,0) -- (0,0);

\draw (0.5,1) node (a) {$A$};
\draw (1.25,0.75) node (a) {$B$};
\draw (2,0.5) node (a) {$C$};

\draw (0,0) node [fill, circle, inner sep=2pt, label=180:$a$] (a) {};
\draw (2.5,1.5) node [fill, circle, inner sep=2pt, label=0:$b$] (b) {};
\end{tikzpicture}
\end{center}
\caption{Partition of $R(a,b)$ into three regions in Case 1 of the proof
  of Theorem~\ref{th:squareDel}.}
\label{fi:cinside}
\end{figure}
We first consider the case when there is at least one point of $P$ lying inside
rectangle $R(a, b)$. If there is a point $c$ inside $R(a,b)$ such that
$y_c \leq x_c$ and $y-y_c \leq x-x_c$ (i.e., $c$ lies in the region $B$
shown in Figure~\ref{fi:cinside} then we can apply induction to get
$d_T(a, c) \leq (1 + \sqrt{2}) x_c + y_c$ and $d_T(c,b) \leq (1 + \sqrt{2})
(x- x_c) + y-y_c$ and use these to obtain the desired bound for $d_T(a, b)$.

We now assume that there is no point inside region $B$. If there is
still a point in $R(a,b)$ then there must be one that is on the border
of $S_a$, the square we defined in the basis step, or $S_b$, defined
as the largest square having $b$ as a northeast vertex and containing
no points of $P$ in its interior.  W.l.o.g., we assume the former and
thus there is an edge $(a,c) \in T$ such that either $y_c > x_c$
(i.e., $c$ is inside region $A$ shown in Figure~\ref{fi:cinside} or
$y-y_c > x-x_c$ (i.e., $c$ is inside region $C$).  Either way, $d_T(a,
c) = d_2(a, c) \leq x_c + y_c$. If $c$ is in region $A$, since $x-x_c
\geq y-y_c$, by induction we also have that $d_T(c, b) \leq (1 +
\sqrt{2}) (x-x_c) + (y-y_c)$. Then
\begin{eqnarray*}
d_T(a, b) & \leq & d_T(a,c) + d_T(c, b) \\
          & \leq & x_c + y_c + (1 + \sqrt{2}) (x-x_c) + (y-y_c) \leq  (1 + \sqrt{2}) x + y
\end{eqnarray*}
In the second case, since $x-x_c < y-y_c$, by induction we have that
$d_T(c, b) \leq (1 + \sqrt{2}) (y-y_c) + (x-x_c)$. Then, because $y < x$,
\begin{eqnarray*}
d_T(a, b) & \leq & d_T(a,c) + d_T(c, b) \\
          & \leq & x_c + y_c + (1 + \sqrt{2}) (y-y_c) + (x-x_c) \leq (1 + \sqrt{2}) x + y
\end{eqnarray*}

\noindent {\bf Case 2: The interior of $R(a,b)$ is empty.}

If no square $S_1, S_2, \dots, S_k$ is inductive, $d_T(a,b) \leq (1 +
\sqrt{2}) x + y$ by Lemma~\ref{le:inductive}.  Otherwise, let $S_i$ be
the first inductive square in the sequence and suppose that $h_i$ is
the inductive point of $S_i$. By Lemma~\ref{le:mono}, there is a $j$,
$i \leq j \leq k$, such that $h_i, h_{i+1}, h_{i+2}, \dots, h_j$ is a
path in $T$ of length at most $(x_{h_j}-x_{h_i}) + (y_{h_i}-y_{h_j})$
and such that $x-x_{h_j} \geq y_{h_j}-y \geq 0$. Since $h_j$ is closer
to $b$, using the $L_\infty$-metric, than $a$ is, we can apply
induction to bound $d_T(h_j, b)$. Putting all this together with
Lemma~\ref{le:inductive}, we get:
\begin{eqnarray*}
d_T(a, b) & \leq & d_T(a, h_i) + d_T(h_i, h_j) + d_T(h_j, b) \\
          & \leq & (1 + \sqrt{2})x_{h_i} - (y_{h_i} - y) + (x_{h_j}-x_{h_i}) + (y_{h_i}-y_{h_j}) + (1 + \sqrt{2})(x-x_{h_j}) + (y_{h_j} - y) \\
          & \leq & (1 + \sqrt{2}) x~.
\end{eqnarray*}

If $l_i$ is the inductive point of $S_i$, by Lemma~\ref{le:mono} there
is a $j$, $i \leq j \leq k$, such that $l_i, l_{i+1}, l_{i+2}, \dots,
l_j$ is a path in $T$ of length at most $(x_{h_j}-x_{h_i}) +
(y_{h_j}-y_{h_i})$ and such that $x-x_{h_j} \geq y-y_{h_j} \geq
0$. Because the position of $j$ with respect to $b$ is good and since
$l_j$ is closer to $b$, using the $L_\infty$-metric, than $a$ is, we
can apply induction to bound $d_T(l_j, b)$. Putting all this together
with Lemma~\ref{le:inductive}, we get:
\begin{eqnarray*}
d_T(a, b) & \leq & d_T(a, l_i) + d_T(l_i, l_j) + d_T(l_j, b) \\
          & \leq & (1 + \sqrt{2})x_{l_i} + y_{l_i} + (x_{l_j}-x_{l_i}) + (y_{l_j}-y_{l_i}) + (1 + \sqrt{2})(x-x_{l_j}) + (y - y_{l_j}) \\
          & \leq & (1 + \sqrt{2}) x + y~.
\end{eqnarray*}

\end{proofof}

What remains to be done is to prove Lemma~\ref{le:inductive}.
To do this, we need to develop some further terminology and tools.
Let $x_i$, for $1 \leq i \leq k$,  be the horizontal distance between point
$a$ and the E side of $S_i$, respectively. We also set $x_0 = 0$.
\begin{definition}
A square $S_i$ has {\em potential} if
$$d_T(a,h_i) + d_T(a, l_i) + d_{S_i}(h_i, l_i) ~\le~ 4x_i$$
where $d_{S_i}(h_i,l_i)$ is the Euclidean distance when moving from
$h_i$ to $l_i$ along the sides of $S_i$, clockwise.
\end{definition}

\begin{lemma}
\label{le:potential}
If $R(a,b)$ is empty then $S_1$ has potential. Furthemore, for any
$1 \leq i < k$, if $S_i$ has potential but is not inductive then
$S_{i+1}$ has potential.
\end{lemma}

\begin{proof}
If $R(a,b)$ is empty then, by Lemma~\ref{le:properties}, $a$ lies on the W
side of $S_1$ and $x_1$ is the side length of square $S_1$. Also, $h_1$ lies
on the N or E side of $S_1$, and $l_1$ lies on
the S or E side of $S_1$. Then $d_T(a,h_1) + d_T(a, l_1) + d_{S_1}(h_1, l_1)$
is bounded by the perimeter of square $S_1$ which is $4x_1$.

Now assume that $S_i$, for $1 \leq i < k$, has potential but is not inductive.
Squares $S_i$ and $S_{i+1}$ both contain points $l_i$ and $h_i$. Because
$S_i$ is not inductive, edge $(l_i, h_i)$ must be steep and
thus $d_x(l_i, h_i) < d_y(l_i, h_i)$. To simplify the arguments, we assume
that $l_i$ is to the W of $h_i$, i.e., $x_{l_i} < x_{h_i}$. The case
$x_{l_i} > x_{h_i}$ can be shown using equivalent arguments.  

Since $T_i = \triangle(h_{i-1},h_i,l_{i-1}=l_i)$ or 
$T_i = \triangle(h_{i-1}=h_i,l_{i-1},l_i)$, there has to be a side of $S_i$
between the sides on which $l_i$ and $h_i$ lie, when moving clockwise from
$l_i$ to $h_i$. Using the constraints on the position of $h_i$ and $l_i$
within $S_i$ from Lemma~\ref{le:properties} and using assumptions that
$(l_i,h_i)$ is steep and that $x_{l_i} < x_{h_i}$, we deduce that $l_i$ must be
on the S side and $h_i$ must be on the N or E side of $S_i$.

If $h_i$ is on the N side of $S_i$ then, because $x_{l_i} < x_{h_i}$, $h_i$
must also be on the N side of $S_{i+1}$ and either $l_i$ is on the S side of
$S_{i+1}$ and
\begin{equation}
\label{eq:translation}
d_{S_{i+1}}(h_i, l_i) - d_{S_i}(h_i , l_i) ~=~ 2(x_{i+1}-x_i)
\end{equation}
as shown in Figure~\ref{fi:cases}a) or $l_i$ is on the W side of
$S_{i+1}$, in which case
\begin{equation}
\label{eq:growth}
d_{S_{i+1}}(h_i, l_i) - d_{S_i}(h_i , l_i) ~\le~ 4(x_{i+1}-x_i)
\end{equation}
as shown in Figure~\ref{fi:cases}b).

If $h_i$ is on the E side of $S_i$ then, because $x_{i+1} > x_i$ (since
$(l_i,h_i)$ is steep), $h_i$ must be on the N side of $S_{i+1}$ and either
$l_i$ is on the S side of $S_{i+1}$ and inequality~(\ref{eq:translation}) holds
or $l_i$ is on the W side of $S_{i+1}$ and inequality~(\ref{eq:growth})
holds, as shown in Figure~\ref{fi:cases}c).

Since $S_i$ has potential, in all cases we obtain:
\begin{eqnarray}
\label{eq:invariant}
d_T(a,h_i) + d_T(a, l_i) + d_{S_{i+1}}(h_i, l_i) ~\le~ 4x_{i+1}~.
\end{eqnarray}

Assume $T_{i+1} = \triangle(h_i, h_{i+1}, l_i=l_{i+1})$; in other words,
$(h_i,h_{i+1})$ is an edge of $T$ with $h_{i+1}$ lying somewhere on the boundary
of $S_{i+1}$ beween $h_i$ and $l_i$, when moving clockwise from $h_i$ to $l_i$.
By the triangular inequality, $d_2(h_i, h_{i+1}) \leq d_{S_{i+1}}(h_i, h_{i+1})$
and we have that:
\begin{eqnarray*}
d_T(a,h_{i+1}) + d_T(a, l_{i+1}) + d_{S_{i+1}}(h_{i+1},l_{i+1}) & \leq & d_T(a,h_i) + d_T(a, l_i) + d_2(h_i, h_{i+1}) + d_{S_{i+1}}(h_{i+1},l_i) \\
 & \leq & d_T(a,h_i) + d_T(a, l_i) + d_{S_{i+1}}(h_i, l_i) \\
 & \leq & 4x_{i+1}~.
\end{eqnarray*}
Thus $S_{i+1}$ has potential. The argument for the case when
$T_{i+1} = \triangle(h_i=h_{i+1}, l_i, l_{i+1})$ is symmetric.
\end{proof}

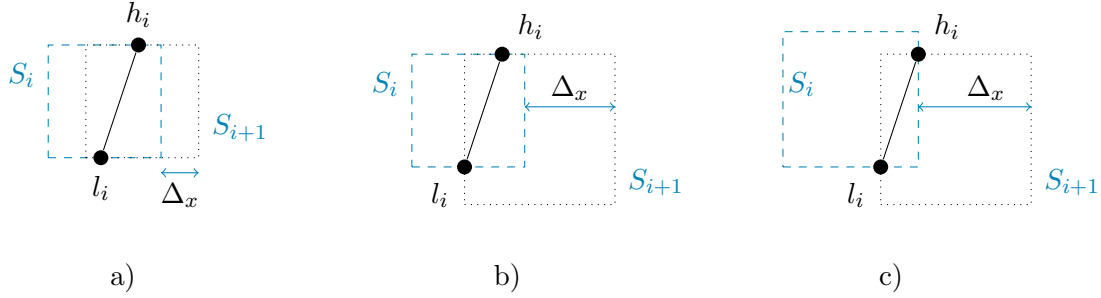
\begin{figure}
\begin{center}
\begin{tikzpicture}

\draw [color=textcolor, dashed] (0,0) rectangle (1.5,1.5);
\draw [dotted] (0.5,0) rectangle (2,1.5);

\draw (0,1.1) node [inner sep=1pt, label=180:{\color{textcolor}{$S_i$}}] {};
\draw (2,0.4) node [inner sep=1pt, label=0:{\color{textcolor}{$S_{i+1}$}}] {};

\node at (1.2,1.5) [fill, circle, inner sep=2pt, label=90:$h_i$] (hi) {};
\node at (0.7,0) [fill, circle, inner sep=2pt, label=-90:$l_i$] (li) {};

\draw [color=textcolor, <->] (1.5,-0.2) -- (2,-0.2);
\node at (1.75, -0.5) {$\Delta_x$};

\draw (hi) -- (li);
\end{tikzpicture}\hspace{1cm}
\begin{tikzpicture}

\draw [color=textcolor, dashed] (0,0) rectangle (1.5,1.5);
\draw [dotted] (0.7,-0.5) rectangle (2.7,1.5);

\draw (0,1.1) node [inner sep=1pt, label=180:{\color{textcolor}{$S_i$}}] {};
\draw (2.7,-0.2) node [inner sep=1pt, label=0:{\color{textcolor}{$S_{i+1}$}}] {};

\node at (1.2,1.5) [fill, circle, inner sep=2pt, label=45:$h_i$] (hi) {};
\node at (0.7,0) [fill, circle, inner sep=2pt, label=-135:$l_i$] (li) {};

\draw [color=textcolor, <->] (1.5,0.8) -- (2.7,0.8);
\node at (2.1, 1.05) {$\Delta_x$};


\draw (hi) -- (li);
\end{tikzpicture}\hspace{1cm}
\begin{tikzpicture}

\draw [color=textcolor, dashed] (-0.6,0) rectangle (1.2,1.8);
\draw [dotted] (0.7,-0.5) rectangle (2.7,1.5);

\draw (0,1.1) node [inner sep=1pt, label=180:{\color{textcolor}{$S_i$}}] {};
\draw (2.7,-0.2) node [inner sep=1pt, label=0:{\color{textcolor}{$S_{i+1}$}}] {};

\node at (1.2,1.5) [fill, circle, inner sep=2pt, label=45:$h_i$] (hi) {};
\node at (0.7,0) [fill, circle, inner sep=2pt, label=-135:$l_i$] (li) {};

\draw [color=textcolor, <->] (1.2,0.8) -- (2.7,0.8);
\node at (2.1, 1.05) {$\Delta_x$};


\draw (hi) -- (li);
\end{tikzpicture}
\end{center}
\hspace{1.75cm} a) \hspace{4.5cm} b)  \hspace{4.5cm} c)
\caption{The first, second and fourth case in the proof of
Lemma~\ref{le:potential}. In each case, the difference $d_{S_{i+1}}(h_i, l_i)
- d_{S_i}(h_i,l_i)$ is shown to be at most $4\Delta_x$, where $\Delta_x =
x_{i+1}-x_i$.}
\label{fi:cases}
\end{figure}

\begin{definition}
  A vertex $c$ ($h_i$ or $l_i$) of $T_i$ is \emph{promising in $S_i$}
  if it lies on the E side of $S_i$.
\end{definition}

\begin{lemma}
  \label{le:path}
  If square $S_i$ has potential and $c = h_i$ or $c = l_i$ is a
  promising point in $S_i$ then $$d_T(a, c) ~\le~ 2x_c~.$$
\end{lemma}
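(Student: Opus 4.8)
The plan is to use the promising hypothesis to convert the target $2x_c$ into $2x_i$, and then to extract $d_T(a,c)\le 2x_i$ from the potential inequality by a symmetrization argument rather than by the obvious rearrangement. First I would record the one geometric consequence of the hypothesis: since $c$ is promising it lies on the E side of $S_i$, and since $x_i$ is by definition the horizontal distance from $a$ to the E side of $S_i$, we have $x_c=x_i$. So it suffices to prove $d_T(a,c)\le 2x_i$. I would note in passing why the naive approach is too weak: using $d_T(a,h_i)+d_T(a,l_i)+d_{S_i}(h_i,l_i)\le 4x_i$ to write $d_T(a,c)\le 4x_i-(d_T(a,c')+d_{S_i}(h_i,l_i))$, where $c'$ is the other of $h_i,l_i$, fails because the two subtracted terms can together be close to $x_i$, giving only $d_T(a,c)\le 3x_i$.

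The key step is to observe that $d_T(a,c)$ is \emph{also} bounded by exactly those other two terms, so that both halves of the potential budget control it. I would treat the case $c=h_i$, the case $c=l_i$ being identical after exchanging the roles of $h_i$ and $l_i$. Since $h_i$ and $l_i$ are two vertices of the triangle $T_i$, the pair $(h_i,l_i)$ is an edge of $T$, hence $d_T(h_i,l_i)=d_2(h_i,l_i)$; and since a straight chord is no longer than the clockwise boundary arc joining its endpoints, $d_2(h_i,l_i)\le d_{S_i}(h_i,l_i)$. The triangle inequality for $d_T$ then gives
$$d_T(a,h_i)~\le~ d_T(a,l_i)+d_T(l_i,h_i)~\le~ d_T(a,l_i)+d_{S_i}(h_i,l_i).$$
Adding $d_T(a,h_i)$ to both sides and invoking the potential of $S_i$,
$$2\,d_T(a,h_i)~\le~ d_T(a,h_i)+d_T(a,l_i)+d_{S_i}(h_i,l_i)~\le~ 4x_i,$$
so $d_T(a,h_i)\le 2x_i=2x_c$, as required.

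The only real obstacle is spotting this symmetrization. The potential budget $4x_i$ bounds the \emph{sum} of $d_T(a,c)$ and the remaining two terms, while the triangle inequality — using that $(h_i,l_i)$ is a genuine edge of $T$ and that a chord is shorter than an arc — shows $d_T(a,c)$ is no larger than those remaining two terms; hence both summands dominate $d_T(a,c)$ and we gain the factor $2$. Everything else is routine: the two cases $c=h_i$ and $c=l_i$ are mirror images, and the promising hypothesis is used only to guarantee $x_c=x_i$, which is precisely what upgrades the inequality $d_T(a,c)\le 2x_i$ to the stated $d_T(a,c)\le 2x_c$.
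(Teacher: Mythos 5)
Your proof is correct and is essentially the paper's argument: both use the promising hypothesis only to get $x_c=x_i$, both use the edge $(h_i,l_i)$ of $T_i$ together with the triangle inequality and the chord-versus-boundary-arc bound to get $d_T(a,h_i)\le d_T(a,l_i)+d_{S_i}(h_i,l_i)$, and both then combine this with the potential inequality. Your ``symmetrization'' ($2\,d_T(a,h_i)\le 4x_i$) is just an equivalent repackaging of the paper's case split (``either $d_T(a,h_i)\le 2x_i$ or $d_T(a,l_i)+d_{S_i}(h_i,l_i)\le 2x_i$, and the second case reduces to the first'').
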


\begin{proof}
  W.l.o.g., assume $c = h_i$. Since $h_i$ is promising, $x_c = x_{h_i}
  = x_i$.  Because $S_i$ has potential, either $d_T(a,h_i) \leq
  2x_{h_i}$ or $d_T(a,l_i) + d_{S_i}(l_i,h_i) \leq 2x_{h_i}$. In the
  second case, we can use edge $(l_i,h_i)$ and the triangular
  inequality to obtain $d_T(a,h_i) \leq d_T(a,l_i) + |l_ih_i| \leq
  2x_{h_i}$.
\end{proof}

Here we define the maximal high and minimal low path.

\begin{definition}~
  \begin{itemize}
  \item If $h_j$ is promising in $S_j$, the \emph{maximal high path
      ending at} $h_j$ is simply $h_j$; otherwise, it is the path
    $h_i, h_{i+1}, \dots, h_j$ such that $h_{i+1},\dots,h_j$ are not
    promising and either $i=0$ or $h_i$ is promising in $S_i$.
  \item If $l_j$ is promising in $S_j$, the \emph{maximal low path
      ending at} $l_j$ is simply $l_j$; otherwise, it is the path
    $l_i, l_{i+1}, \dots, l_j$ such that $l_{i+1},\dots,l_j$ are not
    promising and either $i=0$ or $l_i$ is promising in $S_i$.
  \end{itemize}
\end{definition}

Note that by Lemma~\ref{le:properties}, all edges on the path $h_i,
h_{i+1}, \dots, h_j$ are WN edges and thus the path length is bounded
by $(x_{h_j}-x_{h_i}) + (y_{h_j}-y_{h_i})$. Similarly, all edges in
path $l_i, l_{i+1}, \dots, l_{j}$ are WS edges and the length of the
path is at most $(x_{l_j}-x_{l_i}) + (y_{l_i}-y_{l_j})$.

We now have the tools to prove Lemma~\ref{le:inductive}.

\begin{proofof}{\bf Lemma~\ref{le:inductive}.}
If $R(a,b)$ is empty then, by Lemma~\ref{le:properties}, $b$ is promising.
Thus, by Lemma~\ref{le:potential} and Lemma~\ref{le:path}, if no square
$S_1, \dots, S_k$ is inductive then $d_T(a,b) \leq 2x < (1+\sqrt{2})x + y$.

Assume now that there is at least one inductive square in the sequence of
squares $S_1, \dots, S_k$. Let $S_j$ be the first inductive square and
assume, for now, that $h_j$ is the inductive point in $S_j$. By
Lemma~\ref{le:potential}, every square $S_i$, for $i<j$, is a
potential square.

Since $(l_j,h_j)$ is gentle, it follows that $d_2(l_j,h_j) \leq
\sqrt{2}(x_{h_j} -
x_{l_j})$. Let $l_i, l_{i+1}, \dots, l_{j-1}=l_j$ be the maximal low
path ending at $l_j$. Note that $d_T(l_i, l_j) \leq (x_{l_j}-x_{l_i}) +
(y_{l_i}-y_{l_j})$. Either $l_i = l_0 = a$ or $l_i$ is a promising point in
potential square $S_i$; either way, by Lemma~\ref{le:path}, we have that
$d_T(a, l_i) \leq 2x_{l_i}$. Putting all this together, we get
\begin{eqnarray*}
d_T(a, h_j) + (y_{h_j}-y) & \leq & d_T(a, l_i) + d_T(l_i,l_j) + d_2(l_j, h_j) + y_{h_j} \\
                & \leq & 2x_{l_i} + (x_{l_j} - x_{l_i}) + (y_{l_i} - y_{l_j}) +
\sqrt{2}(x_{h_j} - x_{l_j}) + y_{h_j} \\
                & \leq & \sqrt{2} x_{h_j} + x_{l_j} + y_{h_j} - y_{l_j} \\
                & \leq & (1 + \sqrt{2}) x_{h_j}
\end{eqnarray*}
where the last inequality follows $x_{l_j} + y_{h_j} - y_{l_j}\leq
x_{h_j}$, i.e., from the assumption that edge $(l_j,h_j)$ is gentle.

If, instead, $c = l_j$ is the inductive point in inductive square $S_j$,
let $h_i, h_{i+1}, \dots, h_{j-1} = h_j$ be the maximal high path
ending at $h_j$. Then $d_T(h_i, h_j) \leq (x_{h_j}-x_{h_i}) + (y_{h_j}-y_{h_i})$.
Just as in the first case, we have that $d_T(a, h_i) \leq 2x_{h_i}$ and
\begin{eqnarray*}
d_T(a, l_j) - y_{l_j} & \leq & d_T(a, h_i) + d_T(h_i,h_j) + d_2(h_j, l_j) - y_{l_j} \\
                  & \leq & 2 x_{h_i} + (x_{h_j}-x_{h_i}) + (y_{h_j}-y_{h_i}) + \sqrt{2}(x_{l_j}-x_{h_j}) - y_{l_j} \\
                  & \leq & \sqrt{2} x_{l_j} + x_{h_j} + y_{h_j} - y_{l_j} \\
                  & \leq & (1 + \sqrt{2}) x_{l_j}~. 
\end{eqnarray*}
where the last inequality follows from $x_{h_j} + y_{h_j} - y_{l_j} \leq
x_{l_j}$, i.e., from the assumption that $(h_j, l_j)$ is gentle.
\end{proofof}

\section{Conclusion and perspectives}





The $L_1$-Delaunay triangulation is the first type of Delaunay
triangulation to be shown to be a spanner~\cite{Chew86}. Progress on
the spanning properties of the TD-Delaunay and the classical
$L_2$-Delaunay triangulation soon followed.  In this paper, we
determine the precise stretch factor of an $L_1$- and
$L_\infty$-Delaunay triangulation and close the problem for good. The
techniques we develop in this paper have potential to be successfully
applied to Delaunay triangulations defined by other regular polygons,
and possibly even to the classical Delaunay triangulation.


From a routing perspective, it is of interest to construct routes in
geometric graphs that can be determined \emph{locally} from a
neighbor's coordinates only~\cite{BCD09}. Unfortunately, the route
that is implicitely constructed in our proof is built using non-local
decisions. It would be interesting to know whether in the
$L_1$-Delaunay triangulation a route with stretch $\sqrt{4+2\sqrt{2}}$
can be constructed using a local routing algorithm. For TD-Delaunay
triangulations, \cite{BFvRV12} showed that there is no local routing
algorithm that achieves a stretch that is less than $5/\sqrt3 \approx
2.88$, whereas the stretch factor is actually~$2$.  We leave open the
questions regarding the gap between the stretch factor of
$L_1$-Delaunay triangulations and the stretch that is possible using
local routing.



\bibliographystyle{alpha}
\bibliography{bib}

\end{document}